\let\script\EuScript
\def\todayst{\ifcase\day\or
 1st\or 2nd\or 3rd\or 4th\or 5th\or
 6th\or 7th\or 8th\or 9th\or 10th\or
 11th\or 12th\or 13th\or 14th\or 15th\or
 16th\or 17th\or 18th\or 19th\or 20th\or
 21st\or 22nd\or 23rd\or 24th\or 25th\or
 26th\or 27th\or 28th\or 29th\or 30th\or
 31st\fi
 ~\ifcase\month\or
 January\or February\or March\or April\or May\or June\or
 July\or August\or September\or October\or November\or December\fi
 \space \number\year}
\def\today{\ifcase\day\or
 1\or 2\or 3\or 4\or 5\or
 6\or 7\or 8\or 9\or 10\or
 11\or 12\or 13\or 14\or 15\or
 16\or 17\or 18\or 19\or 20\or
 21\or 22\or 23\or 24\or 25\or
 26\or 27\or 28\or 29\or 30\or
 31\fi
 ~\ifcase\month\or
 January\or February\or March\or April\or May\or June\or
 July\or August\or September\or October\or November\or December\fi
 \space \number\year}
\def\makeop#1{\@namedef{#1}{\mathop{\operator@font #1}\nolimits}}
\def\t/{Teich\-m\"uller}
\def\reals{\mathbb{R}}
\let\umlaut\"
\def\umlaut{\"}
\def\set#1{\left\{#1\right\}}
\def\bracket#1{\left(#1\right)}
\newcommand{\setN}{\ensuremath{\set{1,\cdots,N}}\xspace}
\newcommand{\bv}{\ensuremath{\mathrm{bv}}}
\newcommand{\sR}{\ensuremath{\script R}\xspace}
\newcommand{\sI}{\ensuremath{{\script I}_r}\xspace}
\newcommand{\sU}{\ensuremath{\script U_r}\xspace}
\newcommand{\sW}{\ensuremath{\script W_r}\xspace}
\newcommand{\sT}{\ensuremath{\script T_r}\xspace}
\newcommand{\mU}{{\mathrm U}}
\newcommand{\mL}{{\mathrm L}}
\newcommand{\mR}{{\mathrm R}}
\newcommand{\mW}{{\mathrm W}}
\newcommand{\mj}{{\mathrm j}}
\newcommand{\dr}{\mbox{$d$-rectangle}\xspace}
\newcommand{\drs}{\mbox{$d$-rectangles}\xspace}
\newcommand{\dimx}{\ensuremath{\mathtt{dim}}\xspace}
\newcommand{\vol}{\ensuremath{\mathtt{vol}}\xspace}
\newcommand{\FWU}{F_{\mathit{WU}}}
\newcommand{\tB}{\ensuremath{\mathtt{BSZ}}\xspace}
\newcommand{\tD}{\ensuremath{\mathtt{DLL}}\xspace}
\newcommand{\tI}{\ensuremath{\mathtt{ILLEGAL}}\xspace}
\newcommand{\tS}{\ensuremath{\mathtt{S}}\xspace}
\newcommand{\tU}{\ensuremath{\mathtt{UNDEF}}\xspace}
\newcommand{\tn}{\ensuremath{\mathtt{next}}\xspace}
\newcommand{\tp}{\ensuremath{\mathtt{prev}}\xspace}
\newcommand{\algendif}{\algend\textbf{endif}\\}
\newcommand{\algendfor}{\algend\textbf{endfor}\\}
	\newcommand{\algcomm}[1]{ \algcomment{/* }{\hspace{3.5mm}* }{#1}{ */}}
\newcommand{\algelsecomm}[1]{\algend\textbf{else}\algcomm{#1}\\\algbegin}
\theoremstyle{plain}
\newtheorem{theorem}{Theorem}[section]
\newtheorem{lemma}[theorem]{Lemma}
\newtheorem{proposition}[theorem]{Proposition}
\newtheorem{corollary}[theorem]{Corollary}
\theoremstyle{definition}
\newtheorem{definition}[theorem]{Definition}
\newtheorem{convention}[theorem]{Convention}
\newtheorem{rules}[theorem]{Rules}
\newenvironment{defn}%
{\ignorespacesafterend\begin{definition}}%
{\nolinebreak\qedsymbol\end{definition}}%
\title{Maximizing the area of intersection of rectangles}
\author{D.B.A. Epstein\footnote{Mathematics Institute, University
of Warwick.
Partial support from BBSRC Grant BB/K018868/1.}
\and M.S. Paterson\footnote{Computer Science Department, University
of Warwick.
Partial support from Centre for Discrete Mathematics and its Applications
(DIMAP).}}
\date{24 March 2017}
\begin{document}
\maketitle

\begin{abstract}
This paper attacks the following problem. We are given a large number $N$
of rectangles in the plane, each with horizontal and vertical sides,
and also a number $r<N$. The given list of $N$ rectangles may contain
duplicates. The problem is to find $r$ of these rectangles, such that,
if they are discarded, then the intersection of the remaining $(N-r)$ rectangles
has an intersection with as large an area as possible.
We will find an upper bound, depending only on $N$ and $r$, and not
on the particular data presented, for the number of steps
needed to run the algorithm on (a mathematical model of) a computer.
In fact our algorithm is able to determine, for each $s\le r$, $s$
rectangles from the given list of $N$ rectangles, such that the
remaining $(N-s)$ rectangles have as large an area as possible, and this
takes hardly any more time than taking care only of the case $s=r$.
Our algorithm extends to \drs---analogues of
rectangles, but in dimension $d$ instead of in dimension 2. Our method is
to exhaustively examine all possible intersections---this is much faster than
it sounds, because we do not need to examine all $\binom Ns$ subsets
in order to find all possible intersection rectangles.
For an extreme example, suppose the rectangles are nested,
e.g., concentric squares of distinct sizes, then the only intersections examined
are the smallest $s+1$ rectangles.
\end{abstract}

\section{Background}
This problem arose from the use of a novel robotically controlled
microscope technology
(\textit{Toponome Imaging System}, abbreviated to \textit{TIS}),
invented by Walter Schubert (see
\cite{Schubert.etal:2006:Analyzing}).
TIS records, pixel by pixel, the location and abundance of
proteins in a tissue section, thus co-locating many different
proteins in the same pixel.
Since proximity is necessary (though not sufficient) for interaction,
this gives a powerful new method for discovering protein complexes and
protein networks, particularly since the anatomy of the section is not
disrupted by the TIS process.
A TIS run results in a large number (typically, hundreds) of different
images of the same tissue section.

Such sequences of images may issue from the TIS machine approximately aligned,
but they will seldom be completely correctly aligned without further processing,
because physical reality, for example changes in temperature or
vibrations from a passing heavy goods vehicle, ensures
that camera measurements are never perfect.
In order to obtain good information about co-location of proteins and
other biomolecules, alignments that are as accurate as possible
must first be achieved. In \cite{Raza.etal:2012:RAMTaB:}, it is
explained how this can be done.
The only adjustments necessary to achieve good alignments turn out to be
rigid translation of the different images against one fixed image.

\begin{figure}[tbp]
\begin{center}
\includegraphics[height=40mm]{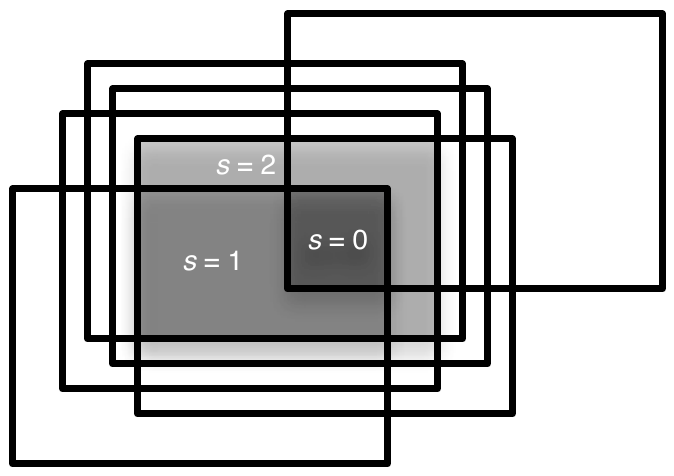}
\caption{$N=6,\: r=2$. Shading shows maximal intersections for $s=0,1,2$.}
\label{fig1}
\end{center}
\end{figure}
After alignment, some pixels will appear in all $N$ images, and others
may appear in only $(N-1)$ images, or in $(N-2)$ images, and so on, as shown
in Figure~\ref{fig1}.
If we insist that a pixel can
be analyzed only if it records all these signals simultaneously, then
we have to restrict to the region where all $N$ images overlap.
However, if one is prepared to discard completely one or more
images, then the set of pixels which figure in all remaining images
will be larger, much larger if one is discarding very seriously
misaligned images.
In practice one would normally decide to jettison poorest
quality images first, so the algorithm presented here, though useful,
is not the final word.

For a similar scenario in more conventional photography, suppose
one wants an image of St.~Mark's Basilica in Venice, but without
the tourists and pigeons. To achieve this, one might take snapshots
periodically from the same spot over an extended period. The
tourists and pigeons can then be eliminated by aligning the images
and then averaging the results over
each pixel in the intersection of the images. Using our algorithm
one can quickly determine which images to discard, in order to
obtain a final image of desired area.

\section{Statement of result}
We start by fixing our notation and clarifying the meaning of the terms
we use. In general, we will work in dimension $d$. In the discussion
above, it was assumed that $d=2$, but our algorithm will work
well in general.

\begin{defn}\label{interval defn}
Given an ordered pair $(\lambda,\rho)$ of reals numbers, we can form the
space $\set{x:\lambda\le x\le \rho}\subseteq \reals$.
If $\lambda < \rho$, this is a closed interval; if $\lambda=\rho$,
it is a point;
and if $\lambda>\rho$ it is the empty set. For the purposes of this paper, it
is convenient to have a concept that includes
the {\it endpoints} $\lambda$ and $\rho$ as well as the underlying
space.
Refraining from pedantic rigour, we will use the word {\it interval}
as though it is nothing
more the underlying space, but
nevertheless, we will feel free to extract the ``endpoints'' whenever
necessary, even when $\lambda>\rho$.
We will denote such an ``interval'' by $[\lambda,\rho]$, even if
$\lambda > \rho$.
\end{defn}

\begin{defn}\label{rectangle defn}
By a {\it \dr}, we mean a product
$$[\lambda_1,\rho_1]\times\dots\times [\lambda_d,\rho_d]\subseteq \reals^d.$$
With the same level of
informality as in Definition~\ref{interval defn}, we are able to
extract the $2d$ endpoints, even when the underlying space is the
empty set. Given a \dr $R$ as above, suppose that $\lambda_i <
\rho_i$ for exactly $k$ values of $i$, and that $\lambda_i=\rho_i$
for the other values of $i$. Then we call the $R$ a {\it $k$-dimensional
rectangle} and define $\dimx(R)=k$.
\end{defn}

Note that a \dr may or may not be a $d$-dimensional rectangle.
In general the underlying space
of a \dr may be the empty set or a point or a $k$-dimensional
rectangle in $\reals^d$.
A \dr can be presented as a $2d$-dimensional row vector
\begin{equation}\label{matrix rectangle eqn}
R =
\begin{bmatrix}
\lambda_1 & \rho_1 \dots & \lambda_d & \rho_d
\end{bmatrix}.
\end{equation}

\begin{defn}\label{volume defn} Given a \dr $R$ as above, we
extend the definition of its volume slightly. The {\it $d$-volume} of
$R$ is a pair $(\dimx(R),\vol(R))$ where:
\begin{enumerate}
\item
If, for each $i$, $\lambda_i < \rho_i$), then
$$
\dimx(R) = d \text{ and } \vol(R)
=(\rho_1-\lambda_1)\times \dots \times (\rho_d-\lambda_d) >0 .
$$
\item If $d>k> 0$ and $R$ is a $k$-dimensional rectangle in $\reals^d$
(as in Definition~\ref{rectangle defn}),
then $\dimx(R)=k$ and $\vol(R)$ is equal to the volume of
$R$, regarded as a $k$-rectangle in $\reals^k$.
For example, if $k=1$, then $R$ is an interval in $\reals^d$ and
$\vol(R)$ is its usual length.
\item If $R$ is a 0-rectangle (that is, a point), then
$(\dimx(R),\vol(R)) = (0,0)$.
\item If, for some $i$, $\lambda_i>\rho_i$, so that the underlying
point set of $R$ is $\emptyset$, then
$$(\dimx(R),\vol(R))=(-1,0).$$
\end{enumerate}
If $d=1$, this gives the length
of the interval, and if $d=2$ the area of the rectangle.
If $d$-volumes need to be ordered, then we use lexicographical
ordering on the pair $(\dimx(R),\vol(R))$.
\end{defn}

\begin{convention}\label{data}
Throughout this paper (unless otherwise stated) $d>0$
will denote the dimension of the euclidean space $\reals^d$ in which
we are working, and \sR will denote a specific ordered list of $N$
\drs. This list can equally well be represented by an $N\times 2d$
real matrix.
We fix an integer $r$ with $0< r<N$.
\end{convention}

Our result can now be stated. Its proof will occupy most of the rest of
this paper.
\begin{theorem}\label{main}
We are given data as in Convention~\ref{data}.
Then there
is an algorithm taking at most
$$
O\bracket{d\binom{r+2d}{2d} + dN}
$$
steps, that determines, for each $s$ with
$0<s\le r$, exactly $s$ entries to delete from the list \sR, so that
the list of the remaining $(N-s)$ \drs,
represented by the remaining $(N-s)$ elements of \sR,
has an intersection whose $d$-volume
is maximal, as we vary over the $\binom{N}{s}$ possible choices of
$s$ elements of \sR.
The constant in our bound for the number of steps is independent of
$N$, $r$ and $d$.
\end{theorem}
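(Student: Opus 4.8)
The plan is to reduce the geometric optimization to a purely combinatorial search over candidate intersection \drs, exploiting that intersections are computed coordinate by coordinate. For a sub-family $S\subseteq\sR$, the intersection is the \dr whose $i$-th interval is $[\max_{R\in S}\lambda_i^R,\ \min_{R\in S}\rho_i^R]$; its $d$-volume in the sense of Definition~\ref{volume defn} is therefore governed by just $2d$ extreme endpoint values, one for each of the $2d$ \emph{directions} (the two ends of each of the $d$ coordinates). The first observation I would record is that discarding a rectangle can enlarge the intersection only by removing one that currently supplies one of these $2d$ binding constraints; everything else is inert.

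Next I would reformulate the choice of a surviving family as a tuple of relaxation levels. In each of the $2d$ directions, sort the $N$ endpoint values by restrictiveness; a candidate box is then specified by a tuple $(t_1,\dots,t_{2d})$, where $t_j$ records how many of the most restrictive rectangles in direction $j$ have been discarded, so that the $j$-th face sits at the $(t_j+1)$-th most restrictive value. The set of rectangles that must be discarded to realize such a box is the union, over the $2d$ directions, of the corresponding ``top-$t_j$'' prefixes, and its size — the deletion cost — is at most $\sum_j t_j$, with equality exactly when these prefixes are disjoint. Volume is monotone, since raising any $t_j$ weakly enlarges the box, so it suffices to examine \emph{tight} boxes (those genuinely realized as an intersection), which are in bijection with their face-vectors and hence with the relevant tuples.

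For the algorithm itself I would first, in $O(dN)$ steps, select in each of the $2d$ directions the $r$ most restrictive rectangles together with their endpoint values (no useful box relaxes a single direction by more than $r$, so a bounded partial selection per direction suffices). I would then run a monotone expansion of tight candidate boxes, starting from the intersection of all $N$ \drs: from a given box, the only enlargements worth trying advance one of the $2d$ faces past its current binding rectangle, discarding that rectangle, which may simultaneously relax several faces. Each discard raises the cost by exactly one, so every tight box of cost $\le r$ is reached within depth $r$. Bucketing by cost and keeping the maximal $\vol$ (ordered lexicographically with $\dimx$) in each bucket, a single prefix-maximum over costs $0,1,\dots,r$ then delivers the optimum simultaneously for every $s$, since the best achievable volume with $\le s$ deletions is monotone in $s$; each candidate costs $O(d)$ to evaluate.

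The hard part will be the counting: bounding the number of tight candidate boxes of cost $\le r$ by $\binom{r+2d}{2d}$, uniformly in $N$. When deletions are \emph{independent} (each discarded rectangle binds in a single direction) the reachable boxes are exactly the tuples with $\sum_j t_j\le r$, of which there are precisely $\binom{r+2d}{2d}$ — this is the extremal case. The obstacle is the opposite regime of \emph{shared} constraints, exemplified by concentric squares, where one discard relaxes all $2d$ faces at once: here tuples have large coordinate-sum yet very few boxes are tight, and the naive ``one move per direction'' expansion is not order-independent. The core of the proof is therefore to arrange the expansion so that distinct tight boxes of cost $\le r$ inject into multisets of size $\le r$ drawn from the $2d$ directions, collapsing the over-counted shared case; this yields the $\binom{r+2d}{2d}$ bound and, combined with the $O(dN)$ preprocessing and $O(d)$ per-candidate cost, the claimed total of $O\bracket{d\binom{r+2d}{2d}+dN}$ with an absolute constant.
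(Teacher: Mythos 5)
Your overall strategy is the same as the paper's: reduce to enumerating candidate intersection boxes parametrized by how far each of the $2d$ faces is relaxed, preprocess with an $O(dN)$ selection of the $r+1$ most restrictive rectangles per direction, run a depth-first expansion of bounded depth, and count the candidates by a stars-and-bars bound of the form $\binom{r+2d}{2d}$. But the step you yourself flag as ``the core of the proof'' --- arranging the expansion so that distinct tight boxes of cost at most $r$ inject into multisets of size at most $r$ over the $2d$ directions --- is exactly the content of the paper's main construction, and you have not supplied it. Two concrete difficulties are hiding there. First, a single discarded rectangle may be binding in several directions at once (your concentric-squares example), so the same tight box is reached by many different expansion orders; without a canonical order the search tree, and hence the step count, is not controlled by $\binom{r+2d}{2d}$. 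Second, when several rectangles are tied at the current binding value in a direction, discarding one of them does not move the face at all, so your elementary move ``advance a face past its current binding rectangle by discarding that rectangle'' is not well defined, and intermediate states are not tight boxes.

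The paper resolves both points by encoding each admissible discard set $U$ as a canonical word $\Delta^{u(1)}S\Delta^{u(2)}S\cdots S\Delta^{u(2d)}S$: each discarded index is charged to the \emph{first} direction (in a fixed sweep $i=1,\dots,2d$) in which it is strictly more restrictive than the eventual barrier value, and within a direction discards are made in a fixed total order that breaks ties by index. Additional rules (notably \ref{block rule}, forcing a whole block of tied endpoints to be discarded together, and \ref{barrier constant rule}, forbidding late discards that would disturb an already-fixed face) are then shown to cut the set of words down to one per element of \sU, i.e.\ the encoding is a bijection. That bijection is what converts your intended multiset injection into an actual count, and proving it requires the characterization of membership in \sU via barrier values (the paper's Lemma~\ref{U description}), which your proposal also leaves implicit. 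In addition, to realize the $O(d)$ cost per node you need data structures (doubly linked lists over the selected $r+1$ indices per direction, plus block-size counters) that let each rule be checked and each discard be undone in $O(d)$ time during backtracking; this is routine but is part of why the final bound is $d\binom{r+2d}{2d}$ rather than something larger. As written, your proposal is a correct outline of the paper's approach with its central lemma asserted rather than proved.
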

\begin{corollary}\label{maincor}
For any fixed dimension $d$, the running time of the algorithm is $O(r^{2d} + N)$.
\end{corollary}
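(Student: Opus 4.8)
The plan is to recast the optimisation as a search over candidate intersection boxes, parametrised by how far each of the $2d$ bounding faces has been pushed outward. For a subset $S\subseteq\sR$ the intersection is computed coordinate by coordinate: in coordinate $i$ it is the interval $[\max_{j\in S}\lambda_i^{(j)},\,\min_{j\in S}\rho_i^{(j)}]$. Hence deleting $s$ \drs to maximise the remaining intersection is the same as choosing an axis-parallel box $B$ of maximal $d$-volume such that at most $s$ of the $N$ rectangles fail to contain $B$: the rectangles we keep are exactly those containing the optimal $B$, and conversely, deleting the (at most $s$) rectangles not containing a chosen $B$ realises $B$ inside the remaining intersection. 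For a fixed coordinate the only left endpoints worth using are the actual values $\lambda_i^{(j)}$, and since we delete at most $r$ rectangles the left face lands among the $r+1$ largest of these; symmetrically on the right. I would therefore begin by precomputing, for each of the $d$ coordinates, the $r+1$ largest left endpoints and the $r+1$ smallest right endpoints together with the rectangles attaining them; by linear-time selection this costs $O(dN)$ overall.

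Each candidate box is then described by a profile $a=(a_1,\dots,a_{2d})\in\integers_{\ge0}^{2d}$, where $a_j$ records how many positions the $j$-th face has advanced into its sorted list; the box $B(a)$ and its $d$-volume are read off from the precomputed data. The number of rectangles that must be deleted to realise $B(a)$ is $\mathrm{del}(a)$, the size of the union over the $2d$ faces of the peeled prefixes. Both the $d$-volume and $\mathrm{del}(a)$ are monotone in $a$ (under the lexicographic order of Definition~\ref{volume defn}), so we seek the Pareto frontier. A short exchange argument shows we may restrict to deletion sets in which every deletion is \emph{useful}, i.e.\ is, at the moment it is removed, a rectangle currently realising some face: an un-useful deletion can be returned to the kept set without shrinking the intersection, saving budget. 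I would therefore run a search from the profile $(0,\dots,0)$ in which each step deletes one currently binding rectangle; because a single rectangle may realise several faces at once, one step may advance several coordinates of $a$ simultaneously, and the search follows this combined advance rather than recording the dominated intermediate profiles. Generating each reached profile once and keeping, against deletion count, the running maximum of $d$-volume yields the optimum for every $s\le r$ at once, which is why handling all $s$ together costs essentially no more than the case $s=r$.

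Correctness reduces to the observation that every useful deletion set corresponds to a unique reached profile, so the maximal box for each budget is examined. The main obstacle is the time bound, specifically that the number of examined profiles is at most $\binom{r+2d}{2d}$. Here $\binom{r+2d}{2d}$ is exactly the number of lattice points in the simplex $\set{a:\sum_j a_j\le r}$, and this count is attained in the \emph{generic} case in which no rectangle realises two faces and all endpoints are distinct: there each useful deletion advances exactly one coordinate, so the profiles reachable within $r$ deletions are precisely that simplex. Any coincidence—one rectangle binding several faces, or ties among endpoints—forces a combined step that skips intermediate profiles and can only reduce the number of nodes. I would make this precise by charging each useful deletion to one face that it advances; this sends an examined profile reached in $t\le r$ deletions to a lattice point of coordinate-sum $t$ in $\set{a:\sum_j a_j\le r}$, and the crux is to prove that this charging map is injective. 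Granting this, each examined profile costs $O(d)$ to expand (comparing the deleted rectangle against the $\le 2d$ current binders and updating the volume), giving $O\bracket{d\binom{r+2d}{2d}}$ for the search and the stated total after the $O(dN)$ preprocessing. The remaining technical nuisance, which I expect to require the most care together with the injectivity, is the handling of tied endpoint values, where a face advances only once every rectangle attaining the current extreme has been deleted.
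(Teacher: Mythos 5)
The statement you are asked to prove is, in the paper, an immediate consequence of Theorem~\ref{main}: for fixed $d$ the bound $O\bracket{d\binom{r+2d}{2d}+dN}$ becomes $O(r^{2d}+N)$ simply because $\binom{r+2d}{2d}\le (r+2d)^{2d}$ and $d$ is a constant. You have instead set out to re-derive the entire algorithm. Your sketch does follow the same broad lines as the paper (linear-time selection of the $r+1$ extreme endpoints per face costing $O(dN)$, then a search whose node count is the number of lattice points in the simplex $\set{a:\sum_j a_j\le r}$), but the two steps you defer are exactly where the substance lies, so as it stands the argument has genuine gaps.

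First, the counting. Your bound of $\binom{r+2d}{2d}$ examined profiles rests on an injectivity claim that you explicitly leave unproven, and the charging map as you describe it cannot be made injective without a canonical ordering of the deletions: a rectangle that currently realises several faces can be charged to any one of them, a given deletion set can be produced in many different orders, and a search that ``advances several coordinates simultaneously'' without fixing which face to work on next has no mechanism guaranteeing that each profile is generated exactly once rather than once per branch that reaches it. The paper resolves precisely this by serialising the process: the $2d$ faces are handled in a fixed order, and while the focus is at face $i$ one always deletes the $\preccurlyeq_i$-smallest surviving index. A candidate is then encoded as a word $\Delta^{u(1)}S\cdots\Delta^{u(2d)}S$, i.e.\ by the tuple $(u(1),\dots,u(2d))$ with $\sum_i u(i)\le r$, and the map $\FWU$ is proved to be a bijection onto the set of legal words; the simplex count counts these words, not unordered deletion sets. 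Second, ties. You call the handling of coincident endpoint values a remaining ``technical nuisance,'' but it is where most of the paper's work goes (the block structure, \ref{barrier constant rule}, \ref{block rule}, \ref{barrier value inequality rule}, and the bijectivity of $\FWU$); without it, neither your exchange argument discarding ``un-useful'' deletions nor your claim that each profile is expanded in $O(d)$ time is justified. Either prove the corollary in one line from Theorem~\ref{main}, or, for a self-contained derivation, supply the canonical ordering and the tie analysis before asserting the $\binom{r+2d}{2d}$ bound.
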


In applications such as that of microscopy, which was our primary motivation,
many rectangles may coincide in some coordinates. Our algorithm is designed
to deal efficiently with this situation. Several of the complications result from this.

We have been thinking mainly of the case when $r$ is small compared
with $N$.
For $r$ near $N$, it is possible to achieve better results with a brute
force search. For example, if $r=N-1$, then one deletes all except one
\dr. So one need only compute the volume of each of the given
\drs, and one then takes the maximum of these. This takes
$O(dN)$ steps. We will not bother to make explicit other similar brute force
results when $r$ is nearly as big as $N$.

To be explicit,
we need a model of computation: we will assume the
usual model of a RAM machine. This means that the number of memory
cells is as large as we need for the computation, and that each of
these is accessed in unit time. Moreover, memory cells are assumed to
be large enough so that any particular integer that needs to be stored
can be stored in a single cell. Formally, we do not work with floating
point numbers. If necessary, these can be converted to integers by a
suitable change of scale. However, it is convenient and less confusing
to pretend that we can work with arbitrary real numbers.
Arithmetical operations including simple comparisons are assumed to
take unit time. This gives a reasonably realistic mathematical
model of a computer for most practical purposes.

\section{Some notations and definitions}
We can think of
a \dr as an element in $\reals^{2d}$, a
row vector as in Equation~\ref{matrix rectangle eqn}.
$\sR$ is then a map $\sR:\setN\to\reals^{2d}$.
In particular, repetition is allowed.
\begin{convention}\label{lists}
Formally speaking, a list is a map, like $\sR$. However, it is
convenient to use the same notation also for the image of
the map. The advantage of this abuse of
notation is that it allows us to write $R\in\sR$ for any element $R$ in
the list. This paper will contain several lists, and we allow the same
abuse of notation in each case. We will disambiguate if we
think there is a possibility of confusion.
\end{convention}

For $1\le j\le N$, we write $R_j$ instead of $\sR(j)$,
and,
following Equation~\ref{matrix rectangle eqn},
we will consistently use the notation:.
\begin{equation}\label{matrix rectangles eqn}
R_j =
\begin{bmatrix}
\lambda_{j,1} & \rho_{j,1} \dots & \lambda_{j,d} & \rho_{j,d}
\end{bmatrix}
=
\begin{bmatrix}
r_{j,1} & \cdots & r_{j,2d}
\end{bmatrix}
\end{equation}

\begin{defn}\label{rectangle indices defn}
We often refer to elements of \setN as {\it rectangle indices}, to
remind ourselves that they are indices for our list $\sR$ of
\drs.
\end{defn}

\begin{defn}\label{intersection defn}
Let $L\subseteq \setN$ be non-empty.
The {\it intersection} of the family $\bracket{R_j}_{j\in L}$
is defined as follows.
For each $i$ with $1\le i\le d$, let
$M_i=\max_{j\in L}\lambda_{ji}$ and let
$m_i=\min_{j\in L}\rho_{ji}$.
Then the {\it intersection} is the \dr
\begin{equation}\label{intersection eqn}
{\mR}(L) = \begin{bmatrix}M_1 & m_1 & \dots & M_d & m_d\end{bmatrix}.
\end{equation}
Equation~\ref{intersection eqn} is the result of a trivial calculation
when ${\mR}(L)$ is not empty, and is a definition when
the point set of ${\mR}(L)$ is empty.
${\mR}(L)$ is called an {\it intersection rectangle}.
We denote by \sI the set of non-empty intersection rectangles
$({\mR}(L)\neq\emptyset)$, where $L\subseteq \setN$ and $|L|\ge N-r$.
\end{defn}

We will find an efficient way to produce, one by one, the
elements of \sI. We will then be able to calculate, for each $R\in\sI$,
the $d$-volume of $R$ and compare these results to find the best one.
The number of subsets like $L$ above, giving rise to elements of \sI
in Definition~\ref{intersection defn},
could be as large as $\sum_{s=0}^r \binom Ns$, potentially a large number.
To reduce this number, note that a given element $R\in \sI$ can in general
satisfy $R= {\mR}(L)$ for many different $L\subset \setN$.
For any such $L$, we define $\mL(R)$, by
\begin{equation}\label{L function}
\mL(R)= \set{j|R\subseteq R_j},
\end{equation}
which is the largest subset of the given set of
rectangles with intersection $R$.
Then $L\subseteq \mL(R) = \mL(\mR(L))$, and $|\mL(\mR(L))| \ge |L| \ge N-r$.

We are thinking of $r$ as small compared with $N$, and so it is easier
to work with the complement of $\mL(R)$ in \setN. We define
\begin{equation}\label{U function}
\mU(R)=\set{j|R\nsubseteq R_j}
\end{equation}
so that $|\mU(R)| \le r$.
\begin{defn}\label{U sub r defn}
We define
$$\sU = \set{\mU(R): R\in\sI}.$$
The map $\mU:\sI \to \sU$ is clearly a bijection.
\end{defn}

An essential feature of our solution will be the use of different
orderings on the set \setN, one for each $i$ with $1\le i \le 2d$.
To understand better why and how orderings are important,
we discuss the special case $d=1$. In this case, we are given a list of $N$
intervals $\sR=\bracket{I_1,\ldots,I_N}$, with
$I_j=[\lambda_j,\rho_j]$.
To avoid irrelevant detail, we assume for
the moment that the $2N$ endpoints of these intervals are all
distinct, and that the intersection of the $N$ intervals is not empty.
We define $u,v \in \setN$ so that $\lambda_u=\max_j\lambda_j=m$ and
$\rho_v=\min_j\rho_j=M$, as in Definition~\ref{intersection defn}.
Unless $I_u$ is removed from \sR, the lefthand
endpoint of the intersection will continue to be $m$. Unless $I_v$ is
removed from \sR, the righthand endpoint of the intersection will
continue to be $M$.
So $I_u$ or $I_v$ must be removed (and remember that $u=v$ is a
possibility) if the length
of the intersection is to strictly increase. If $r\ge 2$, and we are
omitting $r$ intervals from \sR, then similar reasoning will apply to
subsequent removal of intervals. We have to look repeatedly for the
largest lefthand endpoint and, separately, for the smallest righthand
endpoint, in the remaining intervals.

Given a list \sR of $N$ \drs,
we now define, for each $i$ with $1\le i \le 2d$,
orderings $\prec_i$ and $\preccurlyeq_i$ on \setN, modelling the definitions on the
discussion in the previous paragraph.
\begin{defn}\label{orderings defn}
Let $j,k\in \setN$ and let $1\le t \le d$.
If $i=2t-1$, we define $j\prec_i k$ if $\lambda_{j,t} > \lambda_{k,t}$, and
$j\preccurlyeq_i k$ if $\lambda_{j,t}>\lambda_{k,t}$ or
if $(\lambda_{j,t}=\lambda_{k,t}\text{ and } j\le k)$.
If $i=2t$, we define $j\prec_i k$ if $\rho_{j,t} < \rho_{k,t}$, and
$j\preccurlyeq_i k$ if $\rho_{j,t}<\rho_{k,t}$ or if
$(\rho_{j,t}=\rho_{k,t}\text{ and } j\le k)$.
Then $\prec_i$ is a partial order and $\preccurlyeq_i$ is a total order on \setN.
For odd $i$, the orderings are (weakly) decreasing in the $\lambda$'s, and,
for even $i$, they are (weakly) increasing in the $\rho$'s.
\end{defn}

\section{Properties of $X\in\sU$.}
We want to produce the elements $X\in\sU$ efficiently one by one. With
this in mind, let $X\subset\setN$ with $|X|\le r <N$, in which case we
find necessary and sufficient conditions for $X$ to be an element of \sU.

\begin{defn}\label{endpoint list defn}
An {\it endpoint list} is a non-empty subset of \setN, arranged in order,
smallest first, with respect to one of the total orderings $\preccurlyeq_i$ described in
Definition~\ref{orderings defn}.
We set $E_i$ equal to the set \setN, arranged in order according
to $\preccurlyeq_i$, so that $E_i$ is an endpoint list.
\end{defn}

\begin{defn}\label{bv defn}
Let $X\subset\setN$, with $|X|\le r < N$.
For each $i$, let
$\mj(i,X)$ be the smallest element of $E_i$ that is not in $X$ (which must
exist since $X$ has fewer than $N$ elements).
We define $\bv(i,X)=r_{\mj(i,X),i}$
(where $\bv$ stands for {\it barrier value}).
We set
\begin{equation}\label{v(X) eqn}
\bv(X) =(\bv(1,X),\cdots ,\bv(2d,X)).
\end{equation}
\end{defn}

\begin{lemma}\label{U description}
Suppose $X\subset\setN$, with $|X| \le r < N$.
$X\in\sU$ if and only if the following two conditions are satisfied:
\begin{enumerate}[label=\thetheorem.\arabic{*}:\ ,ref=\thetheorem.\arabic{*},
leftmargin=0cm,itemindent=.5cm,labelwidth=\itemindent,labelsep=0cm,align=left]
\item\label{U condn: bv inequality} For each $i$ with $1\le i\le d$,
$\bv(2i-1,X)\le \bv(2i,X)$.
\item\label{U condn: prec}
For each $j\in X$, there is an $i$ such that $j\prec_i \mj(i,X)$.
\end{enumerate}
Under these conditions, $X=\mU(\bv(X))$.
\end{lemma}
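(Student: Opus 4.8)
The plan is to first reinterpret the barrier-value vector $\bv(X)$ geometrically, and then read off both conditions as statements about a single candidate rectangle. Unwinding Definition~\ref{orderings defn}, for odd $i=2t-1$ the total order $\preccurlyeq_i$ lists indices by decreasing $\lambda_{\cdot,t}$, so $\mj(2t-1,X)$ is the index outside $X$ with largest $\lambda_{\cdot,t}$; for even $i=2t$ it lists indices by increasing $\rho_{\cdot,t}$, so $\mj(2t,X)$ is the index outside $X$ with smallest $\rho_{\cdot,t}$. Hence
$$\bv(2t-1,X)=\max_{j\notin X}\lambda_{j,t},\qquad \bv(2t,X)=\min_{j\notin X}\rho_{j,t},$$
which is precisely to say $\bv(X)=\mR(\setN\setminus X)$, the intersection rectangle of the complement of $X$. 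Writing $R:=\bv(X)=[M_1,m_1,\dots,M_d,m_d]$, this identification turns condition~\ref{U condn: bv inequality} into the assertion that $R$ is non-empty (each coordinate interval is non-empty iff $M_t\le m_t$), and turns condition~\ref{U condn: prec} into the assertion that every $j\in X$ fails $R\subseteq R_j$, since $j\prec_{2t-1}\mj(2t-1,X)$ means $\lambda_{j,t}>M_t$ and $j\prec_{2t}\mj(2t,X)$ means $\rho_{j,t}<m_t$.

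For the ``if'' direction I would set $L=\setN\setminus X$, so $R=\mR(L)$ and $|L|=N-|X|\ge N-r$. Condition~\ref{U condn: bv inequality} makes $R$ non-empty, so $R\in\sI$. It remains to check $X=\mU(R)$. Condition~\ref{U condn: prec} gives $X\subseteq\mU(R)$ directly. For the reverse inclusion, if $j\in L$ then $M_t\ge\lambda_{j,t}$ and $m_t\le\rho_{j,t}$ for each $t$, so $R\subseteq R_j$ and $j\notin\mU(R)$; hence $\mU(R)\subseteq X$. Thus $X=\mU(R)=\mU(\bv(X))\in\sU$, which simultaneously establishes the final clause.

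For the ``only if'' direction, suppose $X\in\sU$, say $X=\mU(R)$ with $R=\mR(L)\in\sI$ non-empty and $|L|\ge N-r$. The complement of $X$ is exactly $\mL(R)=\set{j:R\subseteq R_j}$, so the task reduces to showing $\bv(X)=\mR(\mL(R))$ coincides with $R$. Since $L\subseteq\mL(R)$, intersecting over the larger index set gives $\mR(\mL(R))\subseteq\mR(L)=R$; and since $R\subseteq R_k$ for every $k\in\mL(R)$, we get $R\subseteq\mR(\mL(R))$. Comparing endpoints, $\max_{k\in\mL(R)}\lambda_{k,t}=M_t$ and $\min_{k\in\mL(R)}\rho_{k,t}=m_t$, so $\bv(X)=R$. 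Non-emptiness of $R$ then yields condition~\ref{U condn: bv inequality}, and $j\in X=\mU(R)$ means $R\not\subseteq R_j$, yielding condition~\ref{U condn: prec}.

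The one delicate point throughout is the endpoint bookkeeping: the paper treats a rectangle as carrying its $2d$ endpoints even when its point set is empty, so the identities $\bv(X)=\mR(\setN\setminus X)$ and $R=\mR(\mL(R))$ must be verified as equalities of endpoint vectors, not merely of underlying point sets. I expect the only genuine (and still minor) obstacle to be confirming $R=\mR(\mL(R))$ at the level of endpoints, where one uses the inclusion $L\subseteq\mL(R)$ to squeeze each maximum and minimum from both sides; once that is in hand, every other step is a direct translation through the identification $\bv(X)=\mR(\setN\setminus X)$.
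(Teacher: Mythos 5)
Your proof is correct and follows essentially the same route as the paper's: both directions hinge on identifying $\bv(X)$ with the intersection rectangle of the complementary index set, then reading condition~\ref{U condn: bv inequality} as non-emptiness of that rectangle and condition~\ref{U condn: prec} as the failure of $R\subseteq R_j$ for each $j\in X$. Your explicit identity $\bv(X)=\mR(\setN\setminus X)$ and the squeeze $\mR(\mL(R))\subseteq\mR(L)=R\subseteq\mR(\mL(R))$ are a slightly cleaner packaging of the coordinate-by-coordinate argument the paper uses to establish $\bv(X)=R$, but the substance is identical.
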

\begin{proof}
Suppose first that $X\in\sU$, or equivalently that
$X=\mU(R)$ for some $R\in\sI$, where
$$\emptyset\neq R=\begin{bmatrix}r_1&r_2&\dots&r_{2d-1}&r_{2d}\end{bmatrix}.$$
For any $j\notin \mU(R)$, we have $R\subseteq R_j$, and so,
for $1\le i\le d$,
\begin{equation*}
r_{j,2i-1}\le r_{2i-1} \le r_{2i} \le r_{j,2i}.
\end{equation*}
This holds in particular with $j=\mj(2i-1,X)$ and with $j=\mj(2i,X)$,
giving rise to
\begin{equation}\label{bv inequalities}
\bv(2i-1,X)\le r_{2i-1} \le r_{2i} \le \bv(2i,X),
\end{equation}
proving the first condition in the statement of Lemma~\ref{U
description}.

Continuing with the hypotheses $R\in\sI$ and $X=\mU(R)$,
we next show that $\bv(X)=R$, when we regard both as elements of $\reals^{2d}$.
Suppose $1\le i\le 2d$. Since $R=\bigcap\set{R_j:R\subseteq R_j}$,
we know that $r_i = r_{j,i}$, for some $j$ such that $R\subseteq
R_j$, that is, for some $j\notin X$.
By Equation~\ref{bv inequalities}, if $i$ is odd, we have
$$r_i=r_{j,i} \ge \bv(i,X) = r_{\mj(i,X)}.$$
But $\mj(i,X)$ is, by definition, the smallest element not in $X$,
and the ordering is reversed for odd $i$, according to
the ordering $\preccurlyeq_{i}$ (see Definition~\ref{orderings defn}).
So a strict inequality
$\bv(i,X)<r_{j,i}$ is impossible, and we must have the equality
$\bv(i,X)= r_{i}$. Similarly if $i$ is even. This proves that
$R=\bv(X)$.
Therefore $X=\mU(R)=\mU(\bv(X))$.

We continue to assume that $X=\mU(R)$, and let $j\in X$.
Then $R\nsubseteq R_j$.
It follows that, for some $i$, $[r_{2i-1},r_{2i}]\nsubseteq
[r_{j,2i-1},r_{j,2i}]$.
This implies that
$r_{2i-1}<r_{j,2i-1}$ or $r_{2i}>r_{j,2i}$.
We have shown that $R=\bv(X)$, so this means that $j\prec_{2i-1}
\mj(2i-1,X)$ or $j\prec_{2i} \mj(2i,X)$, proving the second condition in
the statement of Lemma~\ref{U description}.

Now let us assume the two conditions in the statement of the lemma,
and prove that $X\in\sU$.
Let $R=\bv(X)$ be a \dr.
By Condition~\ref{U condn: bv inequality}, $R\neq \emptyset$.
If $j\notin X$, then, for each $i$, $\mj(i,X) \preccurlyeq_i j$.
Therefore $\bv(2i-1,X) \ge r_{j,2i-1}$ and $\bv(2i,X)\le r_{j,2i}$.
It follows that $R\subseteq R_j$.
In particular, for each $j\notin X$ and each $i$ with $1\le i\le 2d$,
$R\subseteq R_{\mj(i,X)}$, so that $R\subseteq \bigcap_i R_{\mj(i,X)}$.
On the other hand, Definition~\ref{intersection defn} shows that
$\bigcap_i R_{\mj(i,X)}\subseteq R$.
We deduce that $R=\bigcap_{j\notin X} R_j$, which shows that
$R\in\sI$.

If, on the other hand, $j\in X$, then, by hypothesis, for some $i$, $j\prec_i
\mj(i,X)$.
Then $r_{j,i} > \bv(i,X)$ if $i$ is odd, and $r_{j,i} < \bv(i,X)$ if $i$
is even.
It follows that $R\nsubseteq R_j$. Therefore
$X=\set{j:R\nsubseteq R_j}$ and $X\in \sU$.
\end{proof}

\section{Program paths}\label{Program paths}
Before describing the algorithm that produces all the elements of
$\sU$ one by one, we describe how to generate one particular
$U\in\sU$.
From Lemma~\ref{U description}, $U=\mU(\bv(U))$.

For $1\le i\le 2d$,
we write $U_i=\set{j|j\prec_i \mj(i,U)}$, so that, by Lemma~\ref{U
description}, $U = \bigcup_i U_i$.
For $1\le i\le 2d$ let
\begin{equation}\label{U sup i}
U^i=U_i\setminus\bigcup_{1\le k<i}U_k.
\end{equation}
$U$ is equal to the disjoint union of the $U^i$.
We generate first the elements of $U^1$, then the elements of $U^2$,
and so on.

We prefer to think of these rectangle indices in $U$ as being {\it
discarded} or deleted, rather than generated, since we
are thinking of generating the complement of $U$,
$L=\setN\setminus U$, and then taking the intersection of the \drs in $L$.
We start with \setN, discard the rectangle indices in $U^1$, then the
rectangle indices in $U^2$, and so on.
To completely specify the order in which the rectangle indices in $U$ are
discarded, we insist that the elements of $U^i$ are discarded in
increasing $\preccurlyeq_i$ order, that is, least first.

\begin{defn}\label{focus defn}
By a $\Delta$-{\it move} we mean one of the deletions just mentioned.
While working with $U^i$, $\Delta$
deletes the $\preccurlyeq_i$-smallest element of $E_i$ remaining.
Until all elements of $U^i$ are deleted, this smallest element is an
element of $U^i$.
While carrying out such deletions, we say that the algorithm is
{\it focussed at $i$}.
An $S$-{\it move} moves from the {\it focus index} $i$ to $(i+1)$.
In our process, there are only these two types of move, and the
process starts with focus at $i=1$.
$\Delta$ and $S$ are mnemonic for ``delete'' and ``shift index by one''.
At the end of the process, a final $S$-move causes the process to stop.
\end{defn}

During the algorithm the $E_i$ keep on changing.
In order to
keep track of what is going on, we introduce a time variable $t$ which
increases by one at each move.
Let $\delta(t,U)$ be the number of $\Delta$-moves and
$s(t,U)$ the number of $S$-moves
during the first $t$ units of time, so that $t=\delta(t,U)+s(t,U)$.
At time $t$, the focus is at $i=s(t,U)+1$.
Recall from Definition~\ref{endpoint list defn} that $E_i$
consists of the elements of \setN listed in ascending order according
to $\preccurlyeq_i$.
We write $E_i(t,U)$ for $E_i$ at time $t$, when
all the lists $E_i(t,U)$ have exactly the same
set of $N-\delta(t,U)$ entries, though the orders in which these entries
appear are, in general, different.
When a $\Delta$-move deletes an element of $E_i(t,U)$ during focus at
$i$ (always the smallest element according to $\preccurlyeq_i$), then that element is
also deleted from all the $E_k(t,U)$, for $1\le k \le 2d$.

The process starts at $t=0$ with focus at $i=1$.
For each $k$ with $1\le k\le 2d$, $E_k(0,U)=\setN$.
For $1\le i\le 2d$, let $t_i$ be the first time at which the focus
moves to $i$ (so $t_1=0$),
and let $t_{2d+1}$ be the time when the process ends.
At time $t_i$, the set of rectangle indices so far deleted
from \setN and from the various $E_\ell$,
is equal to $\bigcup_{1\le k<i}U^k$.
From the definition of $U_i$ immediately preceding Equation~\ref{U sup
i}, we know that
\begin{equation}\label{Ui}
U^i = \set{j|j\in E_i(t_i,U)\text{ and }j\prec_i \mj(i,U)}
\end{equation}
(see Convention~\ref{lists} for the meaning of $j\in E_i$), so that
the $\preccurlyeq_i$-smallest element of $U^i$, which is the next rectangle
number to be discarded (assuming $U^i$ is not empty),
is also the $\preccurlyeq_i$-smallest element of $E_i(t,U)$.
Therefore, $|U^i|$ successive $\Delta$-moves will delete exactly the elements
of $U^i$, and, following this, an $S$-move transfers focus to $(i+1)$,
or, if $i=2d$, $S$ causes the process to stop.

\begin{rules}\label{Rules}
It is convenient to think of the process $\pi(U)$ just described, when
$U\in\sU$, as a word
$\mW(U)$ in the terms $S$ and $\Delta$, read from left to right.
Let $u(i)=\left| U^i\right|$ be the number of elements in $U^i$.
Note that $\sum_i u(i) \le r$.
We define
\begin{equation}\label{W(U) defn}
\mW(U)= \Delta^{u(1)}S\Delta^{u(2)}S\cdots S\Delta^{u(2d)}S.
\end{equation}

We will now present several rules on words $w$ in $S$ and $\Delta$ that
every $w=\mW(U)$ has to obey.
Once the rules have all been made explicit,
we will define $\sW$ to be the set of legal words, that is, the set of all words
satisfying all the rules.
We will then show that the map $\FWU:\sU \to\sW$ given by
$\FWU(U)=\mW(U)$ is a bijection, so that the rules will be not only
necessary but also sufficient for a word $w$ to be equal to $\mW(U)$ for
some $U\in\sU$.
This will reduce our task to producing an algorithm that produces
one by one the set of all legal words.

We already have three rules, satisfied by any word $w=\mW(U)$ for some
$U\in\sU$:
\begin{enumerate}[label=\bf Rule~\Alph{*}:\ ,
ref=Rule~\Alph{*},
leftmargin=0cm,itemindent=.5cm,labelwidth=\itemindent,labelsep=0cm,align=left]
\item\label{2d rule} $w$ contains exactly $2d$ terms equal to $S$.
\item\label{end rule} $w$ ends with an $S$.
\item\label{r rule} $w$ contains at most $r$ terms equal to $\Delta$.
\end{enumerate}
Rules A and B mean that each coordinate direction is considered and Rule C
ensures that at most $r$ deletions are used.

\begin{defn}\label{pi(w) and E(t,k,w)}
Now suppose a word $w$ in $\Delta$ and $S$ satisfies
\ref{2d rule}, \ref{end rule} and \ref{r rule}.
We obtain a process $\pi(w)$, analogous to the process $\pi(U)$ (see
first sentence of \ref{Rules}), that
reads $w$ from left to right, taking action as appropriate for each
term read.
More precisely, we proceed as follows.
If $t\le \mathrm{length}(w)$,
we factorize $w=w_t.v_t$, where the length of $w_t$ is $t$.
We denote by $\delta(t,w)$ the number of terms equal to $\Delta$ in
$w_t$, and by $s(t,w)$ the number of terms equal to $S$ in $w_t$.
After reading $w_t$, the focus is at $i=i(t,w)=s(t,w)+1$.
In particular, the focus is at $i=1$ when $t=0$.
We denote by $E_k(t,w)$ the list $E_k$ after the $\delta(t,w)$
deletions that occur while $w_t$ is being read.
With the focus at $i$, let $j$ be the $\preccurlyeq_i$-smallest element of $E_i(t,w)$.
If the first term of $v_t$ is $\Delta$, then $j$ is deleted from each
$E_k(t,w)$, giving the endpoint list $E_k(t+1,w)$.
If the first term of $v_t$ is $S$,
then focus is transferred from $i$ to $(i+1)$, and, for
each $k$ with $1\le k \le 2d$, $E_k(t+1,w)=E_k(t,w)$.
Let $U(w)$ be the set of all the rectangle indices $j$ deleted
while $w$ is read.
We are interested only when $U(w)\in\sU$, which is not in general the
case for an arbitrary word $w$ satisfying only the three rules above.
\end{defn}

We will now describe further rules that take account of coincident
coordinates, and will prove in
Lemma~\ref{rules true} that these rules are satisfied by $w=\mW(U)$,
whenever $U\in \sU$.
It will turn out that, conversely,
if $w$ satisfies all the rules we present, then there is a
unique $U\in \sU$ such that $w=\mW(U)$.

\begin{enumerate}[resume*]
\item\label{barrier constant rule}
As rectangles are discarded, barrier values may change.
A lefthand endpoint barrier value will not increase and may decrease.
A righthand endpoint barrier value will not decrease and may increase.
This rule will ensure that any rectangle whose removal would change
the barrier value at $i$ must be removed when the focus is no later than $i$.
Let $\mj(t,k,w)$ be the $\preccurlyeq_k$-smallest element of $E_k(t,w)$, and let
$\bv(t,k,w)=r_{\mj(t,k,w),k}$.
Suppose that $2\le i\le 2d$ and that
$t_i$ is the first time at which the focus is at $i$.
The condition on $w$ is that,
if $1\le k<i$, then $\bv(t,k,w)$ is independent of $t$ provided that
$t_i\le t\le \mathrm{length}(w)$.
We denote this constant value by $\bv(k,w)$.

\item\label{block rule}
Fix $i$ with $1\le i\le 2d$.
Let $A$ be an endpoint list with ordering $\preccurlyeq_i$.
By an $A$-{\it block} we mean a non-empty set of
entries in $A$ of the form
$$B=\set{j|j\in A \text{ and } r_{ji}=x}$$
for some $x\in\reals$.
Any $j\in A$ is contained in a unique $A$-block.
Such a block can be of length one, but blocks can also be of any
length up to $N$, as would happen, for example, if all of the
\drs in the list $\sR$ were equal to each other.
We denote by $B_k(t,w)$ the initial $E_k(t,w)$-block, and suppose $t$
is such that the focus is at $k$.
Let $b(t,w)=|B_k(t,w)|$. Then $b(t,w)>0$.
The condition on $w=w_t.v_t$ is that,
if $v_t$ starts with $\Delta$, then $\delta(t,w)+b(t,w)\le r$ and $v_t$
starts with $\Delta^{b(t,w)}$.
In other words, if $v_t$ starts with a $\Delta$, then all the
elements of the initial block of $E_k(t,w)$ must be deleted
before the focus increases.
\item\label{barrier value inequality rule}
For each $k$ with $1\le k\le d$, $\bv(2k-1,w)\le \bv(2k,w)$.
This rule ensures that only non-empty intersections are reached,
and this can be checked as the focus leaves any even integer.
\end{enumerate}
\end{rules}

\begin{lemma}\label{rules true}
Let $U\in\sU$ and $w=\mW(U)$.
Then $w$ satisfies all the rules above.
\end{lemma}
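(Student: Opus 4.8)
The plan is to handle the three new rules in turn, after noting that the three rules already established---\ref{2d rule}, \ref{end rule} and \ref{r rule}---are immediate from the shape $\mW(U)=\Delta^{u(1)}S\cdots\Delta^{u(2d)}S$ together with the bound $\sum_i u(i)=|U|\le r$. The single observation driving everything is an identification of barrier values. At the end of the process $\pi(U)$ exactly the indices of $U$ (and no others) have been deleted from every $E_k$, so the final endpoint list $E_k$ consists precisely of $\setN\setminus U$, whose $\preccurlyeq_k$-smallest member is $\mj(k,U)$. Hence $\bv(\mathrm{length}(w),k,w)=\bv(k,U)$, which by Lemma~\ref{U description} (using $R=\bv(U)$, $U=\mU(R)$, $R\in\sI$) equals the $k$-th coordinate $r_k$ of $R=\bv(U)$. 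Once the barrier constant rule is proved, this common value is $\bv(k,w)$.

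For the barrier constant rule I would first record the telescoping identity $\bigcup_{1\le \ell\le k}U^\ell=\bigcup_{1\le\ell\le k}U_\ell$, immediate from Equation~\ref{U sup i}, so that $U_k\subseteq\bigcup_{\ell\le k}U^\ell$. Thus every index $j\prec_k\mj(k,U)$ lies in $U_k$ and has been deleted by time $t_{k+1}$. Consequently, for every $t\ge t_{k+1}$, no surviving index is $\prec_k\mj(k,U)$, while $\mj(k,U)\notin U$ survives to the very end; so the $\preccurlyeq_k$-smallest survivor always has $k$-th coordinate $r_k$. A deletion performed while the focus is at some index larger than $k$ can only remove an index of $U$ that is not $\prec_k\mj(k,U)$, hence one tied with $\mj(k,U)$ in the $k$-th coordinate; removing it changes possibly the barrier index but never the barrier value. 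Therefore $\bv(t,k,w)=r_k$ for all $t\ge t_{k+1}\ge t_i$ whenever $i>k$, which is exactly the barrier constant rule, with $\bv(k,w)=r_k$.

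For the block rule, suppose the focus is at $k$ at time $t$ and $v_t$ begins with $\Delta$. By Definition~\ref{focus defn} the deleted index is the $\preccurlyeq_k$-smallest survivor and belongs to $U^k$, so it satisfies $j\prec_k\mj(k,U)$; its $k$-th coordinate therefore lies strictly beyond $r_k$. Every member of the initial block $B_k(t,w)$ shares this coordinate, hence is $\prec_k\mj(k,U)$ and lies in $U_k$; being still present at time $t$, each such member avoids $\bigcup_{\ell<k}U^\ell$ and so lies in $U^k$. Since the focus-$k$ phase deletes the elements of $U^k$ in increasing $\preccurlyeq_k$ order, and the members of $B_k(t,w)$ (all tied in coordinate) form an initial segment of the surviving $\preccurlyeq_k$ order, the word $v_t$ begins with $\Delta^{b(t,w)}$; moreover $\delta(t,w)+b(t,w)\le|U|\le r$, since the total number of $\Delta$-moves is $|U|$.

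Finally, the barrier value inequality rule is immediate: with $\bv(k,w)=\bv(k,U)$ as above, Condition~\ref{U condn: bv inequality} of Lemma~\ref{U description}, applied to $X=U\in\sU$, gives $\bv(2k-1,U)\le\bv(2k,U)$ for each $k\le d$. I expect the barrier constant rule to be the only genuine obstacle: it is the step that forces one to separate the notions of barrier \emph{index} and barrier \emph{value}, and to verify, via the telescoping $U_k\subseteq\bigcup_{\ell\le k}U^\ell$, that no deletion occurring after the focus has passed $k$---necessarily of an index tied with $\mj(k,U)$---can perturb the value $r_k$.
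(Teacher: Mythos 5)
Your proof is correct and follows essentially the same route as the paper's: Rules A--C from the shape of $\mW(U)$, the barrier constant rule from the facts that $\mj(k,U)$ is never discarded while every element of $U_k=\set{j:j\prec_k\mj(k,U)}\subseteq\bigcup_{\ell\le k}U^\ell$ is gone by time $t_{k+1}$, the block rule by showing the initial block is contained in $U^k$, and the inequality rule from Lemma~\ref{U description}. The only blemish is the parenthetical claim that an index not $\prec_k\mj(k,U)$ must be \emph{tied} with $\mj(k,U)$ in the $k$-th coordinate (it could instead lie strictly later in the $\preccurlyeq_k$ order), but this occurs in a sentence that is redundant given your preceding, complete argument, so nothing is lost.
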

\begin{proof}
From Equation~\ref{W(U) defn} we immediately see that
\ref{2d rule}, \ref{end rule} and \ref{r rule} are true.

To prove \ref{barrier constant rule}, recall that $\mj(k,U)$ is the
$\preccurlyeq_k$ smallest element of $E_k(0,w)\setminus U$
(see Definition~\ref{pi(w) and E(t,k,w)} for the meaning of $E_k(t,w)$).
Since only elements of $U$ are discarded while $w$ is being
read, $\mj(k,U)$ is never discarded.
Therefore, for all $t$, $\mj(k,U)$ is the $\preccurlyeq_k$
smallest element of $E_k(t,w)\setminus U$.
Therefore, for all $t$,
\begin{equation}\label{j inequality}
\mj(t,k,w)\preccurlyeq_k \mj(k,U).
\end{equation}
We claim that, for $t\ge t_{k+1}$,
\begin{equation}\label{independence of t}
\bv(t,k,w)= r_{\mj(t,k,w),k}= r_{\mj(k,U),k} = \bv(k,U),
\end{equation}
where the first and third equalities are definitions.
For, if Equation~\ref{independence of t} were false, then
Equation~\ref{j inequality} would give $\mj(t,k,w)\prec_k \mj(k,U)$.
Recall from the definition just before Equation~\ref{U sup i},
that $U_k=\set{j: j\prec_k \mj(k,U)}$, so that
$\mj(t,k,w)\in U_k$.
However, for $t\ge t_{k+1}$, all elements of $U_k$ have been
discarded, in particular $\mj(t,k,w)$,
This contradiction proves our claim, which proves \ref{barrier constant rule}.

Now we prove \ref{block rule}.
Choose $k$ with $1\le k \le 2d$, and suppose the focus is at $k$ at
time $t$.
Let $B$ be the initial block of $E_k(t,w)$, and let $x\in \reals$
be the real number such that
$ B = \set{j\in E_k(t,w) | r_{jk} = x}$.
The initial $\Delta$ in $v_t$ deletes some $j\in B\cap U^k$, so that
$j\prec_k \mj(k,U)$.
If $k$ is even, this means that $x=r_{j,k} < r_{\mj(k,U),k}$, and, if $k$ is
odd, $x=r_{j,k} > r_{\mj(k,U),k}$. It follows that, for all $j\in B$,
$j\prec_k \mj(k,U)$. But this means that $B\subseteq U_k$.
Since the focus is on $k$, $B\cap U^j=\emptyset$ for $j<k$.
We deduce that $B\subseteq U^k$, and so all elements of $B$ are
discarded immediately.
This completes the proof of \ref{block rule}.

\ref{barrier value inequality rule} follows from
Equation~\ref{independence of t} and Lemma~\ref{U description}, and
this completes the proof of Lemma~\ref{rules true}.
\end{proof}

\begin{proposition}
The map $\FWU:\sU\to \sW$, defined by $\FWU(U)=\mW(U)$, is a bijection.
\end{proposition}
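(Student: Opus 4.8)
The plan is to exhibit an explicit two-sided inverse of $\FWU$, built from the reading process $\pi(w)$ of Definition~\ref{pi(w) and E(t,k,w)}. Define $G\colon\sW\to 2^{\setN}$ by $G(w)=U(w)$, the set of rectangle indices discarded while $w$ is read. Since Lemma~\ref{rules true} already shows that $\FWU$ lands in $\sW$, it suffices to prove two things: that $G$ maps $\sW$ into $\sU$, and that $G$ and $\FWU$ are mutually inverse, i.e. $G\circ\FWU=\id_{\sU}$ and $\FWU\circ G=\id_{\sW}$. The first identity will give injectivity and the second surjectivity.

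The identity $G\circ\FWU=\id_{\sU}$ is essentially the content of the discussion in Section~\ref{Program paths}. Reading $\mW(U)=\Delta^{u(1)}S\cdots\Delta^{u(2d)}S$ makes $\pi(\mW(U))$ perform, during focus $i$, exactly $u(i)=|U^i|$ deletions, each forced to remove the current $\preccurlyeq_i$-smallest survivor; by the paragraph preceding Rules~\ref{Rules} these are precisely the elements of $U^i$, and since $U$ is the disjoint union of the $U^i$ the total discarded set is $U$. Hence $G(\mW(U))=U$, so $\FWU$ is injective.

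For surjectivity I fix $w\in\sW$, set $X=G(w)$, and must show both $X\in\sU$ and $\mW(X)=w$. To place $X$ in $\sU$ I verify the two conditions of Lemma~\ref{U description}. The bound $|X|\le r$ is Rule~C. For Condition~\ref{U condn: bv inequality} I first identify the static barrier value $\bv(k,X)=r_{\mj(k,X),k}$ with the stabilised dynamic value $\bv(k,w)$ of Rule~D: since $\mj(k,X)$ is never discarded it is the $\preccurlyeq_k$-smallest survivor at the end, and Rule~D makes this value constant from time $t_{k+1}$ onward, exactly as in Equation~\ref{independence of t}; Condition~\ref{U condn: bv inequality} is then Rule~F. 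For Condition~\ref{U condn: prec}, take $j\in X$ deleted at focus $k$; being the $\preccurlyeq_k$-smallest survivor at that moment and distinct from the surviving $\mj(k,X)$ gives $j\preccurlyeq_k\mj(k,X)$ in the \emph{total} order, and the task is to upgrade this to the \emph{partial}-order relation $j\prec_k\mj(k,X)$, that is, to rule out a tie $r_{j,k}=r_{\mj(k,X),k}$. Here Rule~E is decisive: a tie would put $j$ and $\mj(k,X)$ in the same $\preccurlyeq_k$-block, and once the first element of that block is deleted Rule~E forces the whole block to be deleted, so $\mj(k,X)$ would lie in $X$, a contradiction. Thus $X\in\sU$.

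Finally I check $\mW(X)=w$ by matching the number $a_i$ of $\Delta$'s that $w$ spends at focus $i$ with $u(i)=|X^i|$. Writing $D_i$ for the set deleted at focus $i$, the strict-prec argument above shows $D_i\subseteq X_i$, while Rule~D shows that every element of $X_k$ is deleted no later than focus $k$ (otherwise a surviving element of $X_k$ would make $\bv(t_{k+1},k,w)$ strictly more extreme than the constant value $\bv(k,w)=\bv(k,X)$); combining these gives $D_i\cap X_k=\emptyset$ for $k<i$, hence $D_i\subseteq X^i$. Since the $D_i$ and the $X^i$ both partition $X$, we get $D_i=X^i$ and $a_i=u(i)$, so $w=\mW(X)$ term by term, proving $\FWU\circ G=\id_{\sW}$. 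I expect the genuine obstacle to be precisely the interaction of coincident coordinates with the two orders: upgrading the weak total-order inequality produced by ``delete the smallest survivor'' to the strict partial-order inequality demanded by Lemma~\ref{U description}, which rests entirely on Rule~E preventing a block from being split, together with Rule~D guaranteeing that barrier-relevant indices are discarded early enough.
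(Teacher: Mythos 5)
Your proof is correct, and its core is the same as the paper's: you place $U(w)$ in \sU by verifying the two conditions of Lemma~\ref{U description}, using \ref{barrier constant rule} to identify the stabilised dynamic barrier values with the static ones, \ref{barrier value inequality rule} for non-emptiness, and \ref{block rule} to upgrade the weak inequality $j\preccurlyeq_k \mj(k,X)$ (which is all that ``delete the smallest survivor'' gives) to the strict relation $j\prec_k\mj(k,X)$ by ruling out a coordinate tie with $\mj(k,X)$. Where you go beyond the paper is in explicitly verifying that the two maps are mutually inverse: the paper's proof checks only that $\FWU$ lands in \sW and that $\pi(w)$ lands in \sU, leaving the identities $U(\mW(U))=U$ and $\mW(U(w))=w$ implicit, whereas you prove the first from the discussion of Section~\ref{Program paths} and the second by the partition argument $D_i\subseteq X_i$ (from the strict-order claim) together with the observation that \ref{barrier constant rule} forces every element of $X_k$ to be discarded while the focus is at most $k$, so that $D_i\subseteq X^i$ and hence $D_i=X^i$ since both families partition $X$. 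That last step is a genuine addition rather than a restatement, and it is needed for a complete proof of bijectivity; your version is therefore slightly more thorough than the printed one while resting on exactly the same rules.
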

\begin{proof}
Lemma~\ref{rules true} shows that we do indeed have $\mW(U)\in\sW$, for
each $U\in\sU$.

Conversely, given $w\in\sW$, we obtain a subset $U\subset \setN$ of
discards by means of the process $\pi(w)$ (Definition~\ref{pi(w) and
E(t,k,w)}).
We need to prove the two conditions of Lemma~\ref{U description}.

Let $V_i=\set{j: j \prec_i \mj(i,U)}$.
By the definition of $\mj(i,U)$ (see Definition~\ref{bv defn}),
$V_i\subseteq U$, so, in due course, all elements of $V_i$ are
discarded during the process $\pi(w)$.
It follows that, for $i$ fixed and $t$ sufficiently large (of course
$t\le r+2d$ throughout),
$\mj(t,i,w)=\mj(i,U)$,
and therefore $\bv(i,w)=\bv(i,U)$.
\ref{barrier value inequality rule} implies \ref{U condn: bv inequality}.

Let $j\in U$.
Then, for some $i$, $j$ is discarded while the focus is at $i$.
While the focus is at $i$, only elements $k\in U$ with $k\prec_i\mj(i,U)$
are discarded.
Therefore $j \prec_i \mj(i,U)$.
We claim that $r_{j,i} \neq r_{\mj(i,U),i}$.
For otherwise $j$ and $\mj(i,U)$ would be in the same $E_i$-block.
But then \ref{block rule} would imply that $\mj(i,U)$ is also
discarded, implying that $\mj(i,U)\in U$.
This contradiction implies that $j\prec_i \mj(i,U)$, which is
\ref{U condn: prec}.
We have proved the two conditions of Lemma~\ref{U description},
and so $U\in\sU$.
\end{proof}

\section{A brief sketch of the algorithm}
\label{brief sketch}
We have reduced our problem to that of finding an
algorithm that produces the words of $\sW$ one-by-one.
Consider the finite set $\script W(r)$ of words $w$ satisfying
\ref{2d rule}, \ref{end rule} and \ref{r rule}.
In the standard fashion, we form these words into a rooted tree
$\script T(r)$, with directed edges, each
edge labelled with either $\Delta$ or $S$, so that there is exactly one
path in $\script T$ from the root to a leaf for each word in $\script W(r)$.
There is a subtree $\sT$ corresponding to the words of $\sW$.

Our algorithm traverses a certain subtree of the tree
$\script T(r)$, in depth-first order, choosing to follow a $\Delta$-edge rather
than an $S$-edge, where there is a choice.
For efficiency reasons, we would prefer to arrange for this
subtree to be exactly the subtree $\sT$ of interest.
However, staying inside $\sT$ would require a substantial
amount of additional computation---this is because
\ref{barrier value inequality rule} cannot be easily checked
term by term.
Instead we explore a subtree of $\script T(r)$
that is somewhat larger than $\sT$.
As soon as we discover that we are outside $\sT$, we backtrack.

Our algorithm uses a recursive function call to perform a
depth-first search of $\sT$.
We maintain a global data structure enabling us to check
that the rules are being followed during the search.
We denote the recursive call by $F(p,s)$.
Here $p$ is the maximum
number of times we permit further rectangle indices to be discarded from
\setN, that is, the maximum number of terms $\Delta$ that the
algorithm may still generate, and $p$ is mnemomic for {\it
permissible discard}. Also $s$ is the number of endpoint lists
still to be examined, that is, the number of terms $S$
still to be generated, and $s$ is mnemonic for {\it shift}.
The behaviour of $F(p,s)$ depends on the state of the global data structure,
for example deciding whether to call $F(p-1,s)$ or
$F(p,s-1)$ or both.
The depth-first aspect of the search comes about since $F(p,s)$ calls
$F(p-1,s)$ before $F(p,s-1)$ if the data structure tells us to call
both.
A principle followed by our algorithm is that $F(p,s)$ alters the most
important part of the data
structure before calling $F(p-1,s)$ and/or $F(p,s-1)$.
On its return, $F(p,s)$ changes this part of the data structure
back again before exiting.
Thus, (the most important part of) the global data structure, when
$F(p,s)$
starts, is the same as the structure when $F(p,s)$ finishes.

The depth-first search starts by calling $F(r,2d)$.

\section{Selection versus sorting}
\label{selection versus sorting}
We will later describe in detail
the data structures used by our algorithm to check,
as rapidly as possible, that the rules are being followed.
In this section we discuss a minor improvement that can be made on the
most obvious approach.
The first three rules are enforced by the notation $F(p,s)$ and so
have no implications for the data structure.
The procedure described above requires us to
order \setN with respect to the $2d$-different
orderings $\preccurlyeq_i$ described in Definition~\ref{orderings defn}.
This takes $O(dN\log N)$ steps.

However, we can do a little better than this.
For each $k$ with $1\le k\le 2d$, let $G_k\subseteq \setN$ be
the ordered set of the first $r+1$ elements of \setN with respect to
the ordering $\preccurlyeq_k$.
It turns out (see Lemma~\ref{G rules}) that our algorithm can be
confined to consideration only of the ordered sets $G_k$, and that it
will then run in the same number of steps (or slightly fewer).
To compute the $G_k$ we proceed as follows.
For each $k$, find the $(r+1)$-st element $p_k\in E_k$.
An algorithm doing this is called a {\it
selection algorithm}, and it requires $O(N)$ steps, as described in
the next paragraph.
As an unordered set $G_k = \set{j:j\preccurlyeq_k p_k}$ is found in $O(N)$ steps.
$G_k$ is then sorted in $O(r\log r)$ steps.
To obtain all the $G_k$ as ordered sets takes
\begin{equation}\label{selection}
O(d(N+r\log r))
\end{equation}
steps.

There is a randomized selection algorithm
\cite{Hoare:1961:Algorithm}, due to Hoare, with an expected number of
steps equal to $O(N)$, though the worst-case bound is $O(N^2)$.
\cite{Blum.etal:1973:Time} contains an algorithm where the number of
steps is bounded by $O(N)$ steps even in the worst case, though the
constant hidden by the $O$-notation is not small.
The latter algorithm only rarely outperforms the Hoare algorithm,
and the Hoare algorithm also has the advantage that it can be
carried out in place, that is, without needing more space than is
needed by the input data.
There is a good article about this topic, giving Hoare's code, at
\url{http://en.wikipedia.org/wiki/Selection_algorithm}.

Given a word $w$ in $\Delta$ and $S$ satisfying
\ref{2d rule}, \ref{end rule} and \ref{r rule},
we have the process $\pi(w)$ (see Definition~\ref{pi(w) and E(t,k,w)}).
We change this slightly to the process $\pi_G(w)$, which is the same
as $\pi(w)$, except that, when deletion of $j$ from $E_k$ is required in
$\pi(w)$, then no action is taken unless $j\in G_k$.
\begin{lemma}\label{G rules}
\ref{barrier constant rule},
\ref{block rule} and
\ref{barrier value inequality rule}
can be checked within the process $\pi_G(w)$.
\end{lemma}
\begin{proof}
We fix $k$, with $1\le k\le 2d$.
By the time the process $\pi(w)$ has completed,
$\alpha$ elements have been
deleted from $G_k$, and $\beta$ elements from $\setN \setminus G_k$, where
$\alpha \le \alpha+\beta \le r$, so
at most $r$ elements have been deleted.
Since $|G_k|=r+1$, one or more elements of
$G_k$ is never deleted during $\pi(w)$.
This means that we can compute $\mj(t,k,w)$ (defined in
\ref{barrier constant rule}) within $\pi_G(w)$.
So we can also compute $\bv(t,k,w)$.
We are therefore able to check \ref{barrier constant rule} and
\ref{barrier value inequality rule}.

We assume now that $w$
fails \ref{block rule} but
passes all the other rules.
We show that the failure can
be seen within the process $\pi_G(w)$.
Using the notation in which \ref{block rule} was stated, we
suppose the rule fails at time $t$, when the focus is at $k$.
We are assuming that $w=w_t.v_t$, where $w_t$ has length $t$,
and that $v_t$ deletes the first element of $B_k(t,w)$, but not all of
it. This means that $v_t$ starts with $\Delta^\alpha.S$, where
$1\le\alpha<b(t,w)$.
The number of discards from $G_k$ caused by $w_t$ is bounded above by
$\delta(t,w)$, and $\delta(t,w)+\alpha \le r$, by \ref{r rule}.
Therefore $w=w_t.v_t$ deletes no more than $r$ elements from $G_k$, so
that at least one element remains in $G_k$.
Since $b(t,w)>\alpha$, the smallest such element must be in the block
$B_k(t,w)$, so that we observe the failure of \ref{block rule} within
$G_k$.
\end{proof}

\section{Data structure}
\label{data structure}
We start with the given input: the $N\times 2d$-matrix $\sR$ as in \ref{data}.
Given $r$ with $0< r < N$, we want to generate all words in $\sW$.
In particular, this means that we need the ability to efficiently
check the rules, avoiding unnecessary computation.
We now describe the different parts of the data structure. As stated above,
significant parts of the data structure, used and changed by $F(p,s)$, are
restored as $F(p,s)$ exits. We will label these with the label
\texttt{restore}.

\begin{enumerate}[label=\thesection.\arabic{*}]
\item\label{tree}
{\tt Tree:}
The tree $\script T(r)$ of all words satisfying our first three rules
was introduced in Section~\ref{brief sketch}.
The subtree $\sT \subseteq \script T(r)$ is formed from \sW, the set
of words satisfying all six of our rules.
{\tt Tree} is a subtree of $\script T(r)$ that normally grows inside
\sT as our algorithm progresses, and various words are explored.
From time to time we may have to explore words outside \sW, because it
may not be immediately apparent that \ref{barrier value inequality
rule} must inevitably fail, no matter how the word is extended.
The error is eventually discovered, and
the illegal branch of {\tt Tree} is snipped off.

Each node $\nu$ of the tree corresponds to a unique word $w(\nu)$ in
the terms $\Delta$ and $S$, read from left to right as we descend the tree.
Each node of the tree is provided with data under the following headings:
\item \texttt{Nodes} and pointers $\Delta$, \texttt{ S, parent}:
We establish two symbolic values \tU and \tI.
The node $\nu$ has pointers
$\nu.\Delta$, {\tt $\nu$.S} and {\tt $\nu$.parent},
each pointing to another node of the tree or set equal to \tU or to \tI.
As soon as we find out that $w(\nu)$ is not a proper prefix of a word in $\sW$,
we set $\nu.\Delta=\tI$, and $\nu.\tS=\tI$.

Initially, {\tt Tree} has only one node, namely {\tt root},
representing the null word, and initially
all its three pointers point to \tU.
For every node $\nu$ the {\tt parent} pointer is never changed,
so that the {\tt parent} pointer of {\tt root} always points to \tU.
If $\nu\neq \mathtt{root}$, then its {\tt parent} pointer always
points to a node, not to \tU, so this can be used to recognize {\tt root}.
\item
{\tt CurrentNode (restore)}: is a global variable, pointing to the node
$\nu$ of {\tt Tree} currently reached by the algorithm.
It fixes the word $w=w(\nu)$ corresponding to this node.
Initially, {\tt CurrentNode} points to {\tt root}, the only
initial node of {\tt Tree}.
\item \texttt{BSZ (restore)} and $Q$ (constant):
Let $B_i$ be the set of $G_i$-blocks, arranged in order using
$\prec_i$. We set $b_i=|B_i|$, and number the elements of $B_i$
in order, using {\it block numbers}
$\set{1,\dots,b_i}$, with 1 numbering the $\prec_i$-smallest
block and $b_i$ the $\prec_i$-largest block.
$Q$ is an $N\times 2d$ matrix,
whose entry in position $(j,i)$ gives the block number of the
$G_i$-block containing the rectangle index $j$, provided $j\in
G_i$, and is equal to \tU if $j\notin G_i$.
Constructing $Q$ takes $O(d\,N)$ steps and
$Q$ does not change during the algorithm.
The changing block sizes are recorded in the $(r+1)\times
2d$-matrix \tB, such that $\tB(k,i)$ is the block size of the $k$-th
block of $G_i$ if $k\le b_i$. If $k>b_i$, then $\tB(k,i)=\tU$.
Initializing \tB takes $O(d\,r)$ steps.
Updating \tB when the rectangle $R_j$ is discarded or restored takes $O(d)$ steps.
\item \tD (\texttt{restore} Doubly Linked List):
In Section~\ref{selection versus sorting} we defined,
for each $i$ with $1\le i\le 2d$,
$G_i$ as the $\preccurlyeq_i$-ordered set of the $r+1$
rectangle indices that are smallest with respect to $\preccurlyeq_i$.
\tD is an $N\times 2d$-matrix, whose $k$-th column is denoted by $\tD_k$.
Each entry of \tD is a pair of rectangle
indices, that initially records $G_i$ and its order as follows.
If $j\notin G_i$, the entry at row $j$ and column $i$ is \tU
and this entry is unchanged throughout the algorithm.
If $j\in G_i$, the $(j,i)$ entry is initially a pair (\tp,\tn)
of rectangle indices,
specifying the ordering $\preccurlyeq_i$ on $G_i$ in the usual manner.
We adjoin a row 0 to \tD to accommodate, for each $i$, a start
entry at $(0,i)$ with
$\tp=\tU$ and \tn equal to the $\preccurlyeq_i$-smallest surviving
element of $G_i$.
If $j$
is the $\preccurlyeq_i$-largest rectangle index surviving, then
$\tn=\tU$ at position $(j,i)$.
The entries in \tD are changed in the usual way (taking
constant time) as elements $j\in G_i$ are
deleted.
When $j$ is deleted
from $G_i$, we do not change the temporarily meaningless
values (\tp,\tn) at position $(j,i)$; instead these values are
retained for use when $j$ is restored.
\tD takes $O(d\,(N+r\log r))$ steps to initialize. Deleting or
restoring a rectangle index $j$ takes $O(d)$ steps.
\item \label{nu.bv} $\nu.\mathtt{vol}$:
Let $m$ be the number of terms in the word $w=w(\nu)$ that are equal to $S$.
For $k\le m$, $\bv(k,w)=r_{j,k}$,
where $j$ is the $\preccurlyeq_k$-smallest
element remaining in the doubly linked list $\tD_k$---this is because
\ref{barrier constant rule} is enforced at each step.
For $2i \le m$, let
$$ \alpha(i)= \bv(2i,w)-\bv(2i-1,w).$$
This is already known, unless $m=2i$ and the
final term of the word $w(\nu)$ is $S$.
If $\alpha(i)<0$, we set $\nu.\mathtt{parent}.\tS = \tI$ and snip off
the part of the tree hanging from $\nu$.
Otherwise set
$$\nu.\mathtt{vol} = \prod \set{\alpha(i)|\ 2i\le m \text{ and }
\alpha(i)>0},
$$
which can be evaluated inductively, each step of the induction
taking constant time.
\item $\nu.\dim$: We set $\nu.\mathtt{dim}$ equal to the number
of $i$ such that $\alpha(i)>0$.
\item {\tt Results:}\label{Results}
For $0\le p \le r$, $\mathtt{Results}(p)$ has three
subcomponents, namely $\mathtt{dim}$, $\mathtt{vol}$ and
$\mathtt{address}$.
The first two subcomponents give the lexicographically
largest result of the form {\tt(dim,vol)} so far achieved
by discarding $p$ $d$-rectangles from the list $\sR$, then taking
the intersection of the remaining $d$-rectangles.
The third subcomponent is the address of the
{\tt Tree} node $\nu_p$ that was current at
the time that this best result was found.
Initially each {\tt dim} entry is $-1$, each {\tt vol} entry is
\tU, and each {\tt address} entry is equal to \tU.
Initialization takes $O(r)$ steps.
\end{enumerate}

\section{Pseudocode}
\label{pseudocode}
{%

Here is pseudocode for the algorithm sketched in Section~\ref{brief
sketch}.
We will use \& and * for the address and indirection operators, as in
C. The address can be, as in C, an address in memory,
or, in languages that do not attempt to mimic machine architecture,
the number of a row in a table or matrix.
The code starts with initialization, which consists of the
sorting processes described above in Section~\ref{data structure}
and setting up the various data structures in the obvious way. This
takes $O(dN + dr\log r)$ steps.

\vspace*{10pt}

The execution path of the program $F(p,s)$, whose description
follows, depends on the {\it state} in which it starts. By the {\it
state} we mean the values contained in the global data structures described
above. So $F(p,s)$ stands for many different possible execution
paths. However, We will be able to give a reasonable upper bound
for the number of steps that $F(p,s)$ requires, and this
upper bound is independent of the particular execution path.

\ref{2d rule} is enforced by the inequality $0\le s \le 2d$.
We need to check that $F(p,s)$ enforces the other five rules.

\subsection{$F(p,s)$: deciding how the tree can be extended}
\label{deciding}
\alginout{The global variables described above;
$p$, an upper bound for the number of characters $\Delta$
still to be generated, or, equivalently, the number of rectangles still to
discard; $s$, the number of characters $S$ still to be generated, or,
equivalently, $2d$ minus the number of characters $S$ already generated; {\tt
CurrentNode}, the address of the current node $\nu$ in the tree.}
{Assignment of the pointers $\nu.\Delta$ and $\nu.\tS$ to
\tI, where appropriate. Storage of best results so far.}

\vspace*{10pt}
\begin{algtab}
\algbegin
$f\leftarrow 2d-s+1$;\algcomm{focus now at $f$}\alglabel{f}\\
$j\leftarrow$ smallest element of $\tD_f$;\alglabel{j}\\
$\nu \leftarrow *\mathtt{CurrentNode}$;\algcomm{$\nu$ is the current node}\\
\algif{$s==0$} \algcomm{Exploration of this word is complete and
all rules are satisfied.}
	set $\nu.\Delta=\nu.\tS=\tI$;\algcomm{Enforcing \ref{end
	rule}}\alglabel{s nonzero}\\
	\algif{$\mathtt{Results}(r-p).\mathtt{(dim,vol)}<
		\nu.\mathtt{(dim,vol)}$ (lexicographically)}
		$\mathtt{Results}(r-p).(\mathtt{(dim,vol)}\leftarrow
		(\nu.\mathtt{(dim,vol)},\&\nu$);\\
	\algendif
 \algelse
 \algcomm{
	 $s>0$. Check all rules, but without altering the main data structures.}
	\algif{$p==0$}
		$\nu.\Delta\leftarrow\tI$;\algcomm{Enforcing \ref{r
		rule}}\alglabel{partial nonzero}\\
	\algelsecomm{$p >0$}
		\algif{$\tB(Q(j,f),f) > p$}
			\algcomm{Note that $\tB(Q(j,f),f)$ is the
			block containing $j$. When $p=0$,
			the inequality is automatically
			satisfied, we have already discarded $r$ rectangle
			indices, and further discards are not allowed.
			If $p>0$ and the inequality is satisfied then
			\ref{block rule} would be violated by a further
			discard and this rules out continuations of $w(\nu)$
			with next letter $\Delta$.}\\
			$\nu.\Delta\leftarrow\tI$\\
		\algelsecomm{\ref{block rule} is satisfied and we check \ref{barrier constant rule}}
			\algforeach{$i$ such that $1\le i < f$}
				$k\leftarrow$ smallest element of $\tD_i$;\\
				\algif{$Q(j,i)==Q(k,i)$ \algand $\tB(Q(j,i),i)==1$}
					\algcomm{$j,k$ in same $G_i$-block and
					\ref{barrier constant rule} would be
					violated by a further deletion.}\\
					$\nu.\Delta\leftarrow\tI$;\\
					\algbreak\ from \textbf{foreach} loop\\
				\algendif
			\algendfor
		\algendif
	\algendif
	\algif{$f$ is even}
		$k\leftarrow$ smallest element of $\tD_{f-1}$\\
		\algif{$r_{k,f-1} > r_{j,f}$}
			\algcomm{\ref{barrier value inequality rule} would be
			violated by following with \tS.}\\
			$\nu.\tS\leftarrow\tI$;\alglabel{negative volume}\\
		\algendif
	\algendif
\algendif
\algend
\end{algtab}

\subsection{$F(p,s)$: extending the tree, and recursion}
\label{recursion}
At this stage we have determined all
failures in the rules that would be immediately apparent on
following $w(\nu)$ with either $\Delta$ or \tS. As a result, we know
which new nodes to construct, and we proceed with this task.
We also alter the global data structures as we explore the tree.

\alginout{The global variables;
$p$, an upper bound for the number of characters $\Delta$
still to be generated, or, equivalently, the number of rectangles still to
discard; $s$, the number of characters $S$ still to be generated, or,
equivalently,
$2d$ minus the number of characters $S$ already generated; {\tt
CurrentNode}, the address of the current node $\nu$ in the tree.}{Construction of new nodes of the tree. Determination of
the best results possible so far, using recursion.}

\vspace*{10pt}
\begin{algtab}
\algbegin
$f\leftarrow 2d-s+1$;
\algcomm{same as Line~\algref{f} of \ref{deciding}}\\
$j\leftarrow$ smallest element of $\tD_f$;
\algcomm{same as Line~\algref{j} of \ref{deciding}}\\
$\nu\leftarrow *\mathtt{CurrentNode}$\\
\algif{$\nu.\Delta\neq \tI$}
	Create node $\alpha$;\\
	$\mathtt{Treelocation} \leftarrow \&\alpha$;\\
	$\nu.\Delta\leftarrow\&\alpha$;\\
	$\alpha.\mathtt{parent}\leftarrow\&\nu$;\\
	\algforeach{$i$ such that $j\in \tD_i$}\alglabel{tD}
		Delete $j$ from $\tD_i$;\\
		Decrease $\tB(Q(j,i),i)$ by 1;\alglabel{decrease block}\\
	\algendfor
	$F(p-1,s)$;\algcomm{$p>0$ by
	Line~\algref{partial nonzero} of \ref{deciding}}\\
	\algif{$\alpha.\Delta==\tI$ \algand $\alpha.\tS==\tI$}
		set $\nu.\Delta = \tI$;\\
	\algendif
	Reinsert $j$ in the doubly linked lists $\tD_i$;\\
	Increase by 1 the various $\tB(Q(j,i),i)$
	(see Line~§\algref{decrease block} above);\\
	$\mathtt{CurrentNode}\leftarrow \&\nu$;\\
\algendif
\algif{$\nu.\tS\neq \tI$}
	Create node $\beta$;\\
	$\mathtt{Treelocation} \leftarrow \&\beta$;\\
	$\nu.\tS\leftarrow\&\beta$;\\
	$\beta.\mathtt{parent}\leftarrow\&\nu$;\\
	\algforeach{$i$ such that $j\in \tD_i$}
		Delete $j$ from $\tD_i$;\\
		Decrease $\tB(Q(j,i),i)$ by 1;\\
	\algendfor
	\algif{$f$ is even}
		$k\leftarrow$ smallest element of $\tD_{f-1}$;\\
		\algif{$r_{k,f-1} == r_{j,f}$}
			$\beta.(\mathtt{dim,vol})=\nu.(\mathtt{dim,vol})$;\\
		\algelsecomm{By Line~\algref{negative volume} of
		Subsection~\ref{deciding} $r_{k,f-1} < r_{j,f}$.}
			$\beta.\mathtt{vol} \leftarrow \nu.\mathtt{vol}\times
			\bracket{r_{j,f}-r_{k,f-1}}$;\\
			$\beta.\mathtt{dim} \leftarrow \nu.\mathtt{dim}+1$;\\
		\algendif
	\algendif
	$F(p,s-1)$;\algcomm{$s>0$ by Line~\algref{s nonzero} of
	Subsection~\ref{deciding}}\\
	\algif{$\beta.\Delta==\tI$ \algand $\beta.\tS==\tI$}
		set $\nu.\tS = \tI$;\\
	\algendif
	Reinsert $j$ in the doubly linked lists $\tD_i$;\\
	Increase by 1 the various $\tB(Q(j,i),i)$
	(see Line~§\algref{decrease block} above);\\
	$\mathtt{CurrentNode}\leftarrow \&\nu$;\\
\algendif
\algend
\end{algtab}
}
\section{Estimate for number of steps}

By induction on $p+s$, there is a unique function
$f(p,s)$ with the following properties:
\begin{equation}\label{f equations}
\begin{split}
f(p,0) &= 1 \text{ for } p \ge 0\,;\\
f(0,s) &= s+1 \text{ for } s\ge 0\,;\\
f(p,s)&= f(p-1,s)+f(p,s-1)+d
\text{ for } p > 0 \text{ and } s>0 \,.
\end{split}
\end{equation}
From the pseudocode in Section~\ref{pseudocode}, one sees
that there is a constant $k>0$, independent of $N$, $d$ and $r$,
 such that the number of steps necessary
to execute $F(p,s)$ is bounded by $k.f(p,s)$.

Using the addition formula for binomial coefficients
$$\binom{n}{s}=\binom{n-1}{s}+\binom{n-1}{s-1} ,$$
one checks that $f(p,s)$ is given by
\begin{equation}\label{steps for F}
f(p,s)=d\binom{p+s}{s} + \binom{p+s+1}{s}-d ,
\end{equation}
since it satisfies Equation~\ref{f equations}.
In particular, we are interested in
\begin{equation}\label{f r d bound}
f(r,2d) = d\binom{r+2d}{2d} + \binom{r+2d+1}{2d}-d.
\end{equation}

To complete the computation of the upper bound for the number of steps
involved in our algorithm, we need to extract the information in the
data structure {\tt Results} (see \ref{Results}).
From $\rho=\mathtt{Results}(p)$, we find the set of discarded
$d$-rectangles by following {\tt parent} pointers, starting at
$\rho.\mathtt{address}$.
This takes at most $O(p+2d)$ steps.
Summing over $0\le p \le r$, we obtain a bound
\begin{equation}\label{reading Results}
O((r+1)2d + r(r+1)/2).
\end{equation}
steps, where the constant implicit in the $O$
notation is independent of $N$, $d$ and $r$.
Combining this with the bounds of Equations~\ref{selection} and \ref{f
r d bound}, we get an overall bound
\begin{equation}\label{overall}
O\bracket{d\binom{r+2d}{2d} + dN}
\end{equation}
since the omitted terms are dominated by those that remain in
Equation~\ref{overall}.

\section{The dual situation}
Let $\script S$ be the set of closed
non-empty $d$-rectangles in $\reals^d$, together with the null-set.
We have a lattice: the greatest lower bound of a finite subset
of $\script S$ is given by intersection. The least upper bound of any
finite subset of $\script S$ exists, but is not in general equal to the union.

Theorem~\ref{main} applies equally well to the dual situation. Here we
are given $N$ $d$-rectangles and a number $r$ with $0\le r\le N$. The
task is to find which $r$ rectangles to discard, so that the
smallest rectangle containing
all except these $r$ rectangles is as small as possible.

We can then prove the dual result to Theorem~\ref{main} by reversing
all inequalities and replacing intersection by least upper bound. It
is possible to sketch scenarios in which such a result might be used,
but we spare the reader rather than labour the point.

\bibliographystyle{plain}
\bibliography{all}
\end{document}